\newtheoremstyle{custom}
  {3pt}
  {3pt}
  {\slshape}
  {}
  {\bfseries}
  {.}
  { }
   {}
\theoremstyle{custom}
\newtheorem{theorem}{Theorem}[section]
\newtheorem{proposition}[theorem]{Proposition}
\newtheorem{proposition/definition}[theorem]{Proposition/Definition}
\newtheorem{lemma}[theorem]{Lemma}
\newtheorem{corollary}[theorem]{Corollary}
\theoremstyle{definition}
\newtheorem{example}[theorem]{Example}
\theoremstyle{remark}
\newtheorem{remark}[theorem]{Remark}
\newcommand{\stack}[2]{\ensuremath{\genfrac{}{}{0pt}{}{#1}{#2}}} 
\newtheoremstyle{exercise}
  {3pt}
  {6pt}
  {}
  {}
  {\bfseries}
  {:}
  { }
   {}
\theoremstyle{exercise}
\newtheorem{exercise}[theorem]{Exercise}
\newtheoremstyle{exercises}
  {3pt}
  {6pt}
  {}
  {}
  {\bfseries}
  {:}
  {\newline}
   {}
\theoremstyle{exercise}
\newtheorem{exercises}[theorem]{Exercises}
\def\boxit#1{\vbox{\hrule height1pt\hbox{\vrule width1pt\kern3pt
  \vbox{\kern3pt#1\kern3pt}\kern3pt\vrule width1pt}\hrule height1pt}}
\def\trank{\text{rank}}
\def\bv{\bold v}
\def\BC{\mathbb C}
\def\BP{\mathbb P}
\def\tdim{{\rm dim}}
\def\hd{,...,}
\def\ww{\wedge}
\def\upperp{{}^\perp}
\def\cO{{\mathcal O}}
\def\11{\mathbf 1}
\def\a{\alpha}
\def\o{\omega}
\def\b{\beta}
\def\s{\sigma}
\def\ot{{\mathord{ \otimes } }}
\def\op{{\mathord{\,\oplus }\,}}
\def\ra{{\mathord{\;\rightarrow\;}}}
\def\dim{{\rm dim}\;}
\def\La#1{\Lambda^{#1}}
\def\op{\oplus}
\def\ep{\epsilon}
\def\op{\oplus}
\def\s{\sigma}
\def\a{\alpha}
\def\b{\beta}
\def\FS{\mathfrak  S}
\def\BP{\mathbb  P}
\def\BC{\mathbb  C}
\def\ep{\epsilon}
\def\hd{, \hdots ,}
\def\La#1{\Lambda^{#1}}
\def\ur{\underline {\bold R}}
\def\ra{\rightarrow}
\def\tim{\operatorname{Image}}
\def\tdim{\operatorname{dim}}
\def\tker{\operatorname{ker}}
\def\tlim{\lim}
\def\trank{\operatorname{rank}}
\def\upperp{{}^{\perp}}
\def\ww{\wedge}
\def\bbb{{\bold{b}}}
\def\be{\begin{equation}}
\def\ene{\end{equation}}
\def\aaa{{\bold {a}}}
\def\bbb{{\bold {b}}}
\def\ccc{{\bold {c}}}
\def\tlog{{\rm{log}}}
\def\Mn{M_{\langle \nnn,\nnn,\nnn\rangle}}
\def\Mnl{M_{\langle \mmm,\nnn,\lll\rangle}}
\def\Mnnl{M_{\langle \nnn,\nnn,\lll\rangle}}
\def\Ddots{\mathinner{\mkern1mu\raise\p@
\vbox{\kern7\p@\hbox{.}}\mkern2mu
\raise4\p@\hbox{.}\mkern2mu\raise7\p@\hbox{.}\mkern1mu}}
\def\trank{{\mathrm {rank}}}
\def\aaa{{\bold a}}\def\bbb{{\bold b}}\def\ccc{{\bold c}}
\def\mmm{\bold m}\def\nnn{\bold n}\def\lll{\bold l}
\def\rig#1{\smash{ \mathop{\longrightarrow}
    \limits^{#1}}}
\begin{document}

\title{New lower bounds for the border rank of matrix multiplication}
\author{J.M. Landsberg and Giorgio Ottaviani}
 \begin{abstract} 
 The border rank of  the matrix multiplication operator for $\nnn\times \nnn$ matrices is a standard measure
of its complexity. Using techniques from algebraic geometry and representation theory,  we show  the border rank 
 is at least
{$2\nnn^2-\nnn$.} Our bounds are better than 
 the previous lower bound (due to Lickteig in 1985) of  $\frac 32\nnn^2+\frac{\nnn}{2}-1$  for
 all  $\nnn\geq 3$. {The bounds are obtained by finding new equations that bilinear maps of small border
rank must satisfy, i.e.,  new equations for secant varieties of triple Segre products, that matrix multiplication
fails to satisfy.}
\end{abstract}
\thanks{ Landsberg  supported by NSF grant  DMS-1006353, Ottaviani is member of GNSAGA-INDAM}
\email{jml@math.tamu.edu, ottavian@math.unifi.it}
\maketitle

\section{Introduction  and {statement} of results}

Finding lower bounds in complexity theory is considered difficult. For example, chapter 14 of \cite{MR2500087} is \lq\lq Circuit lower bounds: Complexity theory's Waterloo\rq\rq .
The complexity of matrix multiplication is roughly equivalent to the complexity of many standard operations in linear algebra,
such as taking {the} determinant or inverse of a matrix. A standard measure of the complexity of an operation is the minimal 
{size of an } arithmetic circuit needed to perform {it.} 
 The exponent of matrix multiplication $\o$ is defined to be   $\underline{\lim}_{\nnn}\tlog_{\nnn}$  of
 the {arithmetic cost to multiply}  $\nnn\times \nnn$  matrices, or equivalently,
 $\underline{\lim}_{\nnn}\tlog_{\nnn}$  of the minimal number of multiplications needed \cite[Props. 15.1,  15.10]{bucs:96}.  
{Det}ermining the complexity of matrix multiplication is a central
 question of practical importance.  We give new lower bounds
 for its complexity in terms of border rank. {These lower bounds are used to
 prove further lower bounds for tensor rank in \cite{Lan1206.1530,Masrav}.}
 

{L}et $A,B,C$ be vector spaces,
with dual spaces $A^*,B^*,C^*$,
and let $T: A^*\times B^*\ra C$ be a bilinear map. The {\it rank} of $T$  is the smallest $r$ such that there
exist $a_1\hd a_r\in A$, $b_1\hd b_r\in B$, $c_1\hd c_r\in C$ such that
$T(\a,\b)=\sum_{i=1}^r a_i(\a)b_i(\b)c_i$.  The {\it border rank}  of $T$ is the smallest $r$ such that $T$
   can be written as a limit of a sequence of bilinear maps
 of   rank $r$. Let 
 $\ur(T)$ denote the border rank of $T$. {See \cite{Ltensor} or \cite{bucs:96} for more on the rank and border rank of tensors, especially
 the latter for their relation to other measures of complexity.}
 
 Let  $\Mnl:  Mat_{\mmm\times \nnn}
 \times Mat_{\nnn\times \lll}\ra Mat_{\mmm\times\lll}$ 
 denote the matrix multiplication operator. One has (see, e.g., \cite[Props. 15.1,  15.10]{bucs:96}) that 
$\o=\underline{\tlim}_{\nnn}(\tlog_{\nnn}\ur(\Mn))$. 
 {N}a\"\i vely   $\ur(\Mnl)\leq \mmm\nnn\lll$ via the standard
 algorithm. In 1969,  V. Strassen \cite{Strassen493} showed that ${\bold R}(M_{\langle 2,2,2\rangle})\leq 7$
 and, as a consequence,  {  ${\bold R}(\Mn)\leq \cO(\nnn^{2.81})$.}  Further upper bounds have been derived since then
by numerous authors, with
the current record {${\bold R}(\Mn)\leq \cO(\nnn^{2.3727})$ \cite{williams}}. In
 1983 {Strassen} showed \cite{Strassen505} that $\ur(\Mn)\geq \frac 32\nnn^2${,} and shortly thereafter
 T. Lickteig {\cite{MR86c:68040}}
 showed  $\ur(\Mn)\geq \frac 32\nnn^2+ \frac{\nnn}{2}-1$.  Since then
 no further general lower bound had been found (although it
 is now known that $\ur(M_{\langle 2,2,2\rangle})= 7$, see \cite{Lmatrix, HIL}),
and a completely different proof (using methods proposed by
Mulmuley and Sohoni for Geometric complexity theory) that
$\ur(M_{\langle \nnn,\nnn,\nnn\rangle})) \geq \frac 32 \nnn^2 - 2$
was given in \cite{peterchristian}. 

Our results are as follows:

\begin{theorem}\label{newmainthm} 
Let $\nnn\le\mmm$. For all $\lll\ge 1$
 \be\label{rboundnew} 
\ur(\Mnl)\geq  \frac{{\nnn\lll}\left(\nnn+\mmm-1\right)}{\mmm}.
\ene
\end{theorem}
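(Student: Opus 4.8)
The plan is to bound $\ur(\Mnl)$ from below by \emph{Koszul flattenings}. For $T\in A\otimes B\otimes C$ and an integer $p$ with $0\le p\le\dim A-1$, write $T=\sum_j a_j\otimes b_j\otimes c_j$ and consider the linear map
\[
T^{\wedge p}_A\colon\ \Lambda^pA\otimes B^{*}\ \longrightarrow\ \Lambda^{p+1}A\otimes C,\qquad
\omega\otimes\beta\ \longmapsto\ \sum_j \beta(b_j)\,(a_j\wedge\omega)\otimes c_j .
\]
If $T=a\otimes b\otimes c$ has rank one, then the image of $T^{\wedge p}_A$ is $(a\wedge\Lambda^pA)\otimes c$, of dimension $\binom{\dim A-1}{p}$. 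Since the rank of a sum of linear maps is at most the sum of the ranks and $\trank$ is lower semicontinuous (so the estimate survives passing to a limit, i.e.\ to border rank), one gets: $\ur(T)\le r$ forces $\trank(T^{\wedge p}_A)\le r\binom{\dim A-1}{p}$, equivalently $\ur(T)\ge \trank(T^{\wedge p}_A)\big/\binom{\dim A-1}{p}$.

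Now realize $\Mnl=\mathrm{Id}_U\otimes\mathrm{Id}_V\otimes\mathrm{Id}_W$ inside $(U\otimes W^{*})\otimes(V\otimes U^{*})\otimes(W\otimes V^{*})$, with $\dim U=\nnn$, $\dim V=\lll$, $\dim W=\mmm$, and apply the above with $A=U\otimes W^{*}$, so $\dim A=\nnn\mmm$. The factor $V$ is a spectator, appearing identically in $B^{*}$ and in $C$: the flattening of $\Mnl$ on $A$ is the flattening of $M_{\langle\mmm,\nnn,1\rangle}$ on $A$ tensored with $\mathrm{id}$ on a $\lll$-dimensional space, hence has rank $\lll$ times that of the $\lll=1$ flattening while the denominator $\binom{\nnn\mmm-1}{p}$ is unchanged; this is the source of the linear factor $\lll$ in \eqref{rboundnew}. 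One then takes $p=\big\lceil\tfrac{\nnn\mmm(\nnn-1)}{\nnn+\mmm-1}\big\rceil$: this is the smallest $p$ for which $\tfrac{\nnn\lll(\nnn+\mmm-1)}{\mmm}\binom{\nnn\mmm-1}{p}\le\dim(\Lambda^pA\otimes B^{*})$, and the hypothesis $\nnn\le\mmm$ is exactly what forces this same quantity to be $\le\dim(\Lambda^{p+1}A\otimes C)$ as well, so that the rank-one estimate is not vacuous; using $\binom{\nnn\mmm}{p}/\binom{\nnn\mmm-1}{p}=\nnn\mmm/(\nnn\mmm-p)$ one checks that $\tfrac{\nnn\mmm}{\nnn\mmm-p}\,\nnn\lll=\tfrac{\nnn\lll(\nnn+\mmm-1)}{\mmm}$ for the chosen $p$ (and exactly so when $\tfrac{\nnn\mmm(\nnn-1)}{\nnn+\mmm-1}\in\mathbb Z$, in which case the desired conclusion is simply that $T^{\wedge p}_A$ is \emph{injective}).

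The crux is therefore to prove $\trank(T^{\wedge p}_A)\ge\tfrac{\nnn\lll(\nnn+\mmm-1)}{\mmm}\binom{\nnn\mmm-1}{p}$ for this $p$; with the previous paragraph this yields \eqref{rboundnew}. The map $T^{\wedge p}_A$ is equivariant for $GL(U)\times GL(V)\times GL(W)$, hence block diagonal with respect to the isotypic decompositions of source and target. I would (i) decompose $\Lambda^pA\otimes B^{*}$ and $\Lambda^{p+1}A\otimes C$ into irreducible $GL(U)\times GL(V)\times GL(W)$-modules using the Cauchy formula for $\Lambda^{\bullet}(U\otimes W^{*})$ and Pieri's rule; (ii) for each matched pair of isotypic components, compute the scalar by which the induced equivariant map acts — equivalently, evaluate it on a single highest weight vector — and show it is nonzero, so that block is an isomorphism; (iii) verify that the source components with no partner, i.e.\ the ``forced'' kernel imposed by the identity structure of $T$, have total dimension at most $\dim(\Lambda^pA\otimes B^{*})-\tfrac{\nnn\lll(\nnn+\mmm-1)}{\mmm}\binom{\nnn\mmm-1}{p}$. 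A more hands-on alternative is to order a basis of $\Lambda^pA\otimes B^{*}$ compatibly with the decomposition of the matrix space $A$ into its column subspaces so that $T^{\wedge p}_A$ becomes block upper triangular with invertible diagonal blocks of the right total size, which reduces everything to one explicit nonvanishing determinant (a Cayley/Koszul-type identity for the identity tensor).

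The expected main obstacle is exactly step (ii)--(iii) above, i.e.\ the rank computation. The Koszul flattening of matrix multiplication is very far from having maximal rank — consistent with $\ur(\Mn)=o(\nnn^3)$ — so no genericity argument is available; its kernel is large and rigidly structured, and the theorem is precisely the assertion that, for the optimal $p$, this kernel is as small as the rank-one estimate allows. The difficulty is the representation-theoretic bookkeeping (determining which Schur modules occur in source and target, and proving the nonvanishing of the induced maps between the matched ones) or, in the alternative, exhibiting explicitly a nonsingular minor of the prescribed size $\tfrac{\nnn\lll(\nnn+\mmm-1)}{\mmm}\binom{\nnn\mmm-1}{p}$. (The borderline case $\nnn=\mmm=2$, where no admissible $p$ exists, falls outside the method and is instead covered by the already known values of $\ur(M_{\langle2,2,\lll\rangle})$.)
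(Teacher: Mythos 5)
There is a genuine gap at the crucial quantitative step. Your framework (the Koszul flattening $T_A^{\wedge p}$ and the estimate $\ur(T)\ge \trank(T_A^{\wedge p})/\binom{\dim A-1}{p}$) coincides with the paper's Theorem \ref{resume}, but you then apply it to the \emph{full} space $A=M\otimes N^{*}$ of dimension $\nnn\mmm$ and hope (your steps (ii)--(iii)) that for $p=\lceil \nnn\mmm(\nnn-1)/(\nnn+\mmm-1)\rceil$ the unmatched part of the kernel is small enough that $\trank(T_A^{\wedge p})\ge \frac{\nnn\lll(\nnn+\mmm-1)}{\mmm}\binom{\nnn\mmm-1}{p}$. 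That inequality is false. The kernel of the full flattening of matrix multiplication is rigidly structured but \emph{large}: by Lemma \ref{kerislem} (with dimension count in Theorem \ref{oldmainthm}) it contains, tensored with the $L$-factor, every module $S_{\pi'}M\otimes S_{\pi+(1)}U$ with $\pi=(\mmm,\nu)$, $|\nu|=p-\mmm$, and such $\pi$ exist as soon as $p\ge\mmm$, which holds for your $p$ whenever $(\nnn-1)^{2}\ge\mmm$, in particular for all square cases $\nnn=\mmm\ge 3$. Concretely (Example \ref{exap4}), for $\mmm=\nnn=3$, $p=4$ the rank is $306\lll$, whereas your target is $5\lll\cdot\binom{8}{4}=350\lll$; the resulting bound is only $\lceil 306\lll/70\rceil$, i.e.\ Lickteig's $14$ at $\lll=3$ instead of $15$. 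Likewise for $\mmm=\nnn=4$, $p=7$ the rank is $38820\lll$ versus the needed $7\lll\binom{15}{7}=45045\lll$. So the direct use of the full flattening tops out at roughly Lickteig-level bounds and cannot reach $\frac{\nnn\lll(\nnn+\mmm-1)}{\mmm}$; no amount of representation-theoretic bookkeeping or clever choice of minor will fix this, because the rank itself is too small.

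The missing idea is a compression \emph{before} flattening. The paper first projects $A=M\otimes N^{*}$ onto a judiciously chosen subspace $A'$ of dimension $\nnn+\mmm-1$, namely the $SL(W)$-summand $S^{\mmm+\nnn-2}W^{*}\subset S^{\mmm-1}W^{*}\otimes S^{\nnn-1}W^{*}$ with $\dim W=2$; projection does not increase border rank (Lemma \ref{proja}), and a rank-one element of $A'\otimes B\otimes C$ now contributes only $\binom{\nnn+\mmm-2}{\nnn-1}$ to the rank of the flattening. For $p=\nnn-1$ the restricted map $U\otimes\Lambda^{\nnn-1}A'\to M^{*}\otimes\Lambda^{\nnn}A'$ is \emph{injective}, and this is proved not by isotypic matching but by an explicit argument with binary forms: the transpose is surjective because, given $l^{\nnn-1}\otimes l_1^{\mmm+\nnn-2}\wedge\cdots\wedge l_{\nnn-1}^{\mmm+\nnn-2}$, one can choose $g\in S^{\mmm-1}W^{*}$ vanishing at $l_1,\ldots,l_{\nnn-1}$ and not at $l$ --- this is where the hypothesis $\nnn\le\mmm$ actually enters (not in the dimension comparison you attribute it to) --- and the genericity assumption is removed by a limit since the image is a linear space. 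Injectivity then gives $\ur(\Mnl)\ge \nnn\lll\binom{\nnn+\mmm-1}{\nnn-1}/\binom{\nnn+\mmm-2}{\nnn-1}=\frac{\nnn\lll(\nnn+\mmm-1)}{\mmm}$. Some such restriction step is indispensable; as written, your plan proves at best a bound of Lickteig type.
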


\begin{corollary}\label{mainthm2} 
\be\label{rboundx}
\ur(\Mnnl)\geq { 2\nnn\lll - \lll}
\ene
 \be\label{rbound2}
\ur(\Mn)\geq  {2\nnn^2-\nnn}.
\ene 
\end{corollary}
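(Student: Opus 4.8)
The plan is to derive both inequalities as immediate specializations of Theorem~\ref{newmainthm}; no new ideas are needed beyond choosing the parameters. For the first inequality \eqref{rboundx} I would set $\mmm=\nnn$ in \eqref{rboundnew}. The hypothesis $\nnn\le\mmm$ is then satisfied (with equality), and the right-hand side of \eqref{rboundnew} simplifies:
\[
\frac{\nnn\lll\,(\nnn+\nnn-1)}{\nnn}=\lll(2\nnn-1)=2\nnn\lll-\lll ,
\]
which gives $\ur(\Mnnl)\ge 2\nnn\lll-\lll$ for every $\lll\ge 1$. For \eqref{rbound2} I would then further specialize to $\lll=\nnn$ (legitimate since $\nnn\ge 1$), so that $\Mnnl$ becomes $\Mn$ and the bound reads $2\nnn\cdot\nnn-\nnn=2\nnn^2-\nnn$; equivalently, one substitutes $\mmm=\lll=\nnn$ into \eqref{rboundnew} in one step.

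Since the whole argument is a substitution, I do not expect any genuine obstacle. The only points to check are bookkeeping: that the constraints $\nnn\le\mmm$ and $\lll\ge 1$ of Theorem~\ref{newmainthm} hold for the chosen values, and that the resulting right-hand sides $2\nnn\lll-\lll$ and $2\nnn^2-\nnn$ are already integers, so that the (a priori real-valued) bound of \eqref{rboundnew} requires no rounding. All the mathematical content of the corollary is carried by Theorem~\ref{newmainthm} itself, whose proof — constructing new equations for the secant varieties of triple Segre products and verifying that the matrix multiplication tensor $\Mnl$ fails to satisfy them — is where the actual difficulty resides.
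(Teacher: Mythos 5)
Your proposal is correct and is exactly how the corollary follows in the paper: set $\mmm=\nnn$ in \eqref{rboundnew} to get $\frac{\nnn\lll(2\nnn-1)}{\nnn}=2\nnn\lll-\lll$, and then $\lll=\nnn$ for \eqref{rbound2}. The paper treats this as an immediate specialization of Theorem~\ref{newmainthm}, just as you do.
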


{For} $3\times 3$ matrices, the state of the art is {now} $15\leq \ur(M_{\langle 3,3,3\rangle})\leq 21$, the
upper bound is due to Sch\"onhage \cite{MR623057}.

 \begin{remark} The best lower bound
   for the {\it rank} of matrix multiplication, 
{was, until recently,}    
$\bold R(M_{\langle \nnn,\nnn,\nnn\rangle})\geq \frac 52\nnn^2-3\nnn$,  due to
 to  Bl\"aser,
 \cite{Bl1}. 
 After this paper {was posted on arXiv}, and using Theorem \ref{mainthm2}, it was shown 
that  $\bold R(M_{\langle \nnn,\nnn,\nnn\rangle})\geq 3\nnn^2-4\nnn^{2/3}-\nnn$ by Landsberg in \cite{Lan1206.1530} and
{then, pushing the same methods further, {A.} Massarenti and {E.} Raviolo showed} $\bold R(M_{\langle \nnn,\nnn,\nnn\rangle})\geq 3\nnn^2-2\sqrt{2}\nnn^{3/2}-3\nnn$ in \cite{Masrav}.
 \end{remark}

Our bounds come  from explicit equations that bilinear maps of low
border rank must satisfy.  These equations are best expressed in
the language of tensors.
Our method is similar in nature to the method used by Strassen
to get his lower bounds - we find explicit polynomials that tensors
of low border rank must satisfy, and show that matrix multiplication fails
to satisfy them. Strassen found his equations via   linear algebra - taking the commutator
of certain matrices.
We found ours using representation theory and algebraic geometry. (Algebraic geometry is not needed for presenting the results.
For its role in our method see   \cite{LOannali}.)
 More precisely,   in {\S\ref{neweqnssect} } 
 we define, for every $p$, a linear map 
\be\label{yfmap}
 (M_{\langle \mmm,\nnn,\lll\rangle })_{A}^{\ww p}\colon 
\BC^{\nnn\lll{{\mmm\nnn}\choose p}}\to \BC^{\mmm\lll{{\mmm\nnn}\choose {p+1}}} 
\ene
 and we prove that
$\ur(\Mnl)\geq{{{\mmm\nnn}-1}\choose p}^{-1}\textrm{{rank}} \left[(M_{\langle \mmm,\nnn,\lll\rangle })_{A}^{\ww p}\right]$. 
In order to prove Theorem \ref{newmainthm}, we specialize this map for a judiciously chosen $p$ to a subspace where it becomes injective.
The above-mentioned  
equations are the minors of the linear map $(M_{\langle \mmm,\nnn,\lll\rangle })_{A}^{\ww p}$. 

The {map}  \eqref{yfmap} is of interest in its own right, we discuss it in detail in \S\ref{MMsect}.  
 { This} is  done with the help of   representation 
theory   - we explicitly describe the kernel as  a sum of irreducible representations
 labeled by   {Young diagrams. }
  
\smallskip

\begin{remark} 
It is conjectured in the computer science community that $\ur(\Mn)$ grows like $\cO(\nnn^{2+\ep})$
for any $\ep>0$.
 A truly significant lower bound  would be a function that
grew like {${ \nnn^2 h( \nnn )}$}  where $h$ is an increasing function. 
No 
{s}uper-linear 
lower bound on the complexity of any explicit tensor (or any computational problem) is known, see
\cite{MR2500087,MR2334207}.

From a mathematician's perspective, all known equations for secant varieties of Segre varieties that
have a geometric model   arise  by  translating multi-linear algebra to linear algebra, and it appears that
the limit of this technique is roughly the {\lq\lq input size\rq\rq\ $3n^2$}. 
\end{remark}

\begin{remark} The methods used here should be applicable to    
lower bound problems coming from the {\it Geometric Complexity Theory} (GCT) introduced by
Mulmuley and Sohoni \cite{MS1}, in particular to 
separate the determinant (small weakly skew circuits) from polynomials
with small formulas (small tree circuits).
\end{remark}

\subsection*{Overview}  
In \S\ref{neweqnssect} we describe the new equations to test for border {rank}   in the language of tensors.
{Theorem  \ref{newmainthm}  is proved in \S\ref{nmainpf}.
We give a detailled analysis of the kernel of the map \eqref{yfmap} in sections 
\S\ref{MMsect} and \S\ref{rbndpfsect}. This analysis should be very useful for future work.
We    conclude in \S\ref{licksect}
with a review of Lickteig's method for purposes of comparison.
An appendix \S\ref{repapp} with basic facts from representation theory that we use is included for
readers not familiar with the subject.}

\subsection*{Acknowledgments} We thank K. Mulmuley and A. Wigderson for discussions regarding the
perspective of   computer scientists,  P. B\"urgisser and A. Wigderson for 
help improving the exposition, {J. Hauenstein with help with  computer calculations,}   and M. Bl\"aser for help with the literature.

\section{The new equations}\label{neweqnssect}
Let $A,B,C$ be complex vector spaces of dimensions $\aaa,\bbb,\ccc$, with $\bbb\leq\ccc$, and
with dual vector spaces $A^*,B^*,C^*$.
Then $A\ot B\ot C$ may be thought of as the space of
bilinear maps $A^*\times B^*\ra C$. 

The most na\"\i ve equations for {border rank}  are the so-called
{\it flattenings}. Given $T\in A\ot B\ot C$, {consider $T$ as a linear map
$B^*\ra A\ot C$ and write $T_B$ for this map}. Then $\ur(T)\geq \trank (T_{B})$ and similarly for
the analogous {$T_A,T_C$}.
The rank of a linear map is determined by taking minors.

\subsection{Strassen's equations}
Strassen's equations 
\cite{Strassen505} may be understood as follows (see \S\ref{originsect} for the geometric origin of this perspective). 
As described in \cite{MR2554725},  tensor {$T_B$} with $Id_A$ to obtain {a linear map $B^*\ot A\ra A\ot A\ot C$ and 
skew-symmetrize the $A\ot A$ factor} to obtain a map
$$
T_A^{\ww 1}: B^*\ot A\ra \La 2 A\ot C.
$$
If $T$ is generic, then { one can show that}  $T_A^{\ww 1}$ will have maximal rank, and if $T=a \ot b \ot c $ is of rank one,  $\trank((a \ot b \ot c )_A^{\ww 1})= \aaa-1$. To see this, expand $a=a_1$ to a basis $a_1\hd a_{\aaa}$ of
$A$ with dual basis $\a^1\hd \a^{\aaa}$ of $A^*$. Then   $T_A^{\ww 1}=\sum_i [\a^{i}\ot b ]\
\ot [a_1\ww  a_{i}\ot c ]$,  so the image is isomorphic to $(A/a_1)\ot c$.

It follows that $\ur(T)\ge\frac{\trank(T_A^{\ww 1})}{\aaa-1}$.
Thus the best bound one could hope for with this technique is
up to $r=\frac{\bbb\aaa}{\aaa-1}$. The minors of {size}  $r(\aaa-1)+1$ of $T_A^{\ww 1}$ give
equations for {the tensors of border rank at most $r$ in $A\ot B\ot C$.} 
This is most effective when $\aaa=3$.  

When $\aaa>3$, for each $3$-plane $A'\subset A$,
consider  the restriction $T|_{A'\ot B\ot C}$ and the corresponding
equations, to obtain   {e}quations for {the tensors of border rank at most $r$ in $A\ot B\ot C$   
as long as}   $r\leq \frac{3\bbb}{2}$.  This procedure
is called {\it inheritance} (see \cite[\S 7.4.2]{Ltensor}).
 
We consider the following generalizations: {tensor  $T_B$  with $Id_{\La pA}$ to obtain a linear map $B^*\ot \La p A\ra \La p A\ot A\ot C$ and 
skew-symmetrize the $\La p A\ot A$ factor  to obtain a map}
\be\label{tadef}
T_A^{\ww p}: B^*\ot \La p A\ra \La {p+1} A\ot C.
\ene
To avoid redundancies, assume $\bbb\leq \ccc$ and $p\leq \lceil \frac \aaa 2\rceil -1$.
Then, if $T=a \ot b \ot c$ is of rank one,
$$
\trank((a \ot b \ot c )_A^{\ww p})= \binom{\aaa -1}p.
$$
To see this, compute   $T_A^{\ww p}=\sum [\a^{i_1}\ww\cdots\ww\a^{i_p}\ot b ]\
\ot [a_1\ww  a_{i_1}\ww\cdots \ww a_{i_p}\ot c ]$,  {to conclude}  the image is isomorphic to $\La p(A/a_1)\ot c$.

In summary:
\begin{theorem}\label{resume}
Interpret $T\in A\ot B\ot C$ as a linear map $B^{*}\to A\ot C$ and let 
$T_A^{\ww p}: B^*\ot \La p A\ra \La {p+1} A\ot C$ be the map obtained by skew-symmetrizing $T\ot Id_{\La p A}$ in the
$A\ot \La p A$ factor. Then
$$\ur(T)\ge\frac{\trank T_A^{\ww p} }{ \binom{\aaa -1}p}.
$$
\end{theorem}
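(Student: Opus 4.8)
The plan is to leverage the single essential property of the construction $T\mapsto T_A^{\ww p}$, namely that it is \emph{linear} in $T$: indeed $T_A^{\ww p}$ is obtained from $T$ by first tensoring with the fixed identity $Id_{\La p A}$ and then composing with the fixed linear projection $A\ot \La p A\to \La {p+1}A$, both operations being linear. Linearity makes the construction compatible with additive decompositions of $T$ and, since a linear map between finite-dimensional spaces is continuous, with limits of $T$; these are exactly the two features needed to upgrade the rank-one computation preceding the statement into the claimed bound.

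First I would treat the inequality for rank (not yet border rank). If $T$ has rank $r$, write $T=\sum_{i=1}^r a_i\ot b_i\ot c_i$. By linearity $T_A^{\ww p}=\sum_{i=1}^r (a_i\ot b_i\ot c_i)_A^{\ww p}$, and since the image of a sum of linear maps is contained in the sum of the images, $\trank$ is subadditive: $\trank(T_A^{\ww p})\le\sum_{i=1}^r\trank\big((a_i\ot b_i\ot c_i)_A^{\ww p}\big)$. Each term on the right equals $\binom{\aaa-1}p$ by the rank-one computation already carried out (the image of $(a\ot b\ot c)_A^{\ww p}$ is isomorphic to $\La p(A/a)\ot c$, of dimension $\binom{\aaa-1}p$). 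Hence $\trank(T_A^{\ww p})\le r\binom{\aaa-1}p$.

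To pass to border rank, suppose $\ur(T)=r$, so there is a sequence of tensors $T_t$ with $\trank(T_t)\le r$ and $T_t\to T$. By linearity, hence continuity, of the construction, $(T_t)_A^{\ww p}\to T_A^{\ww p}$ in the space of linear maps $B^*\ot \La p A\ra \La {p+1}A\ot C$. By the previous paragraph,
$$\trank\big((T_t)_A^{\ww p}\big)\le r\binom{\aaa-1}p$$
for every $t$; since the set of linear maps of rank at most $k$ is closed for each $k$ (being the common zero locus of the $(k+1)\times(k+1)$ minors), the limit satisfies $\trank(T_A^{\ww p})\le r\binom{\aaa-1}p$ as well. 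Dividing by $\binom{\aaa-1}p$ gives $\ur(T)\ge\trank(T_A^{\ww p})/\binom{\aaa-1}p$. There is no real obstacle in this argument: all the content is already in the rank-one computation that precedes the statement; the only points needing a word of care are that the skew-symmetrization is genuinely linear in $T$ and that passing to the limit is legitimate, which is precisely the standard principle that polynomial equations (here, the minors of $T_A^{\ww p}$) vanishing on tensors of rank $\le r$ automatically vanish on their closure, i.e., on tensors of border rank $\le r$.
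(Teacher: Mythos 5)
Your proposal is correct and follows essentially the same route as the paper: decompose an approximating tensor of rank $r$ into rank-one summands, use subadditivity of matrix rank together with the rank-one computation $\trank((a\ot b\ot c)_A^{\ww p})=\binom{\aaa-1}{p}$, and then pass to the limit using lower semicontinuity of rank (closedness of the bounded-rank locus). The only difference is presentational — you separate the rank bound from the limiting argument, while the paper compresses both into one chain of inequalities — so there is nothing to add.
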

\begin{proof}
Let $r=\ur(T)$ and let $T_{\ep}=\sum_{i=1}^rT_{\ep, i}$ be such that  $\bold R(T_{\ep,i})=1$ and $\tlim_{\ep\ra 0}T_{\ep}=T$.
Then 
$$\trank T_A^{\ww p}\le \trank (T_{\ep})_A^{\ww p}
\leq \sum_{i=1}^r\trank (T_{\ep, i})_A^{\ww p}
=r\binom{\aaa -1}p
$$
\end{proof}
 
 \begin{remark} Alternatively, one can  compute the rank using the vector bundle techniques of \cite{LOannali}.
\end{remark}

{
When this article was posted on arXiv, we only knew that 
the 
minors of size $r{\binom{\aaa-1}p}+1$ of the maps $T^{\ww p}_A$ gave nontrivial
equations for  tensors of border rank at most $r$ in $A\ot B\ot C$ for
$r\leq 2\aaa -\sqrt{\aaa}$. Then, in \cite{Lhighbranktensor}, it was shown
they actually give nontrivial equations up to the maximum
$2\bbb -1$.
}
 
 \medskip

We record the following proposition which follows  from Stirling's formula and  the discussion above.

\begin{proposition}
The equations for the variety of { tensors of border rank at most $r$ in $A\ot B\ot C$}  obtained by taking minors
of $T_A^{\ww p}$ are of degree $r\binom{\aaa-1}p +1$. In particular, when
$r$ approaches the upper bound $2\bbb$ and $p=\lceil\frac \aaa 2\rceil -1$, the equations are
asymptotically  of degree 
{$\sqrt{\frac   2 \pi}$} 
 $\frac{ 2^{\aaa } \bbb }{ \sqrt{\aaa-1}}$.
\end{proposition}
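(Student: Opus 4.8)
The plan is to note that both assertions are essentially bookkeeping once Theorem~\ref{resume} is in hand: the degree claim is the observation that a $k\times k$ minor of a matrix with linear entries is a form of degree $k$, and the asymptotic claim is a Stirling estimate of a nearly central binomial coefficient.

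First I would record that, by Theorem~\ref{resume}, a tensor $T$ of border rank at most $r$ satisfies $\trank T_A^{\ww p}\le r\binom{\aaa-1}{p}$, so the vanishing of every minor of size $r\binom{\aaa-1}{p}+1$ of $T_A^{\ww p}$ is a necessary condition and these minors are by definition the equations in question. Next I would observe that the construction \eqref{tadef} --- tensoring $T$ with $Id_{\La p A}$ and skew-symmetrizing in the $A\ot\La p A$ slot --- is $\BC$-linear in $T$, so in linear coordinates on $A\ot B\ot C$ every entry of the matrix representing $T_A^{\ww p}$ is a linear form. Since a $k\times k$ minor expands, via the Leibniz formula, as a sum of products of $k$ such entries, it is a homogeneous polynomial of degree $k$ in the coordinates of $T$. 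Taking $k=r\binom{\aaa-1}{p}+1$ gives the first assertion.

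For the second assertion I would set $p=\lceil\frac{\aaa}{2}\rceil-1$, let $r$ tend to $2\bbb$, and discard the additive $1$ as a lower-order term. The only computation to carry out is the asymptotics of $\binom{\aaa-1}{p}$, and the two parities of $\aaa$ should be handled in parallel: for $\aaa=2m+1$ one has $p=m$ and $\binom{\aaa-1}{p}=\binom{2m}{m}$, while for $\aaa=2m$ one has $p=m-1$ and $\binom{\aaa-1}{p}=\binom{2m-1}{m-1}=\frac12\binom{2m}{m}$. Inserting $\binom{2m}{m}\sim\frac{4^m}{\sqrt{\pi m}}$ and re-expressing in terms of $\aaa$ gives, in both cases, $\binom{\aaa-1}{p}\sim\sqrt{\frac2\pi}\,\frac{2^{\aaa-1}}{\sqrt{\aaa-1}}$; multiplying by $r=2\bbb$ produces the claimed leading term $\sqrt{\frac2\pi}\,\frac{2^{\aaa}\bbb}{\sqrt{\aaa-1}}$. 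There is no genuine obstacle in this proof; the only point that requires a little care is the off-by-one between the two parity cases for $\aaa$, which the explicit computation above resolves.
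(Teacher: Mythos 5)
Your proposal is correct and matches the paper's (essentially unstated) argument: the paper simply remarks the proposition ``follows from Stirling's formula and the discussion above,'' which is exactly your two steps — the entries of $T_A^{\ww p}$ are linear in $T$, so minors of size $r\binom{\aaa-1}p+1$ have that degree, and Stirling applied to the near-central binomial coefficient for $p=\lceil\frac{\aaa}2\rceil-1$ with $r\to 2\bbb$ gives the stated asymptotic. Your handling of the two parities of $\aaa$ (with $\sqrt{\aaa}\sim\sqrt{\aaa-1}$ in the even case) is a correct and slightly more careful account than the paper provides.
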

  
  Theorem \ref{newmainthm}
is obtained by applying the inheritance principle  to   the case
of an $(\nnn+\mmm-1)$-plane $A'\subset A=\BC^{\nnn\mmm}$.

\subsection{Origin of the equations corresponding to minors of \eqref{tadef}}\label{originsect}
This subsection is not used in the proof of the main theorem.
We work in projective
space as the objects we are interested in are invariant under rescaling.

Let $Seg(\BP A\times \BP B\times \BP C)\subset \BP (A\ot B\ot C)$
denote the Segre variety of rank one tensors and let $\s_r(Seg(\BP A\times \BP B\times \BP C))$ denote
its $r$-th secant variety, the variety of tensors of border rank at most $r$.

In \cite{LOannali} we { introduced}  a generalization of  flattenings,
called  {\it Young flattenings}, which in the
present context is as follows: 
Irreducible polynomial representations of the 
general linear group $GL(A)$ correspond to  partitions
$\pi=(\pi_1\hd \pi_{\aaa})$, see \S\ref{glwreps}. Let $S_{\pi}A$ denote
the corresponding $GL(A)$-module. Consider representations
$S_{\pi}A,S_{\mu}B, S_{\nu}C$, and the identity maps
$Id_{S_{\pi}A} \in S_{\pi}A\ot S_{\pi}A^*$ etc...
Then we may consider
$$
T\ot Id_{S_{\pi}A} \ot Id_{S_{\mu}A} \ot Id_{S_{\nu}A} \in  A\ot B\ot C\ot S_{\pi}A\ot S_{\pi}A^*
\ot S_{\mu}B\ot S_{\mu}B^*\ot S_{\nu}C\ot S_{\nu}
C^*
$$
We may decompose $S_{\pi}A\ot A$ according to the Pieri rule (see \S\ref{pieris}) and project to one irreducible component, say $S_{\tilde \pi}A$,
where $\tilde \pi$ is obtained by adding a box to $\pi$, and similarly for $C$, while for $B$ we may decompose $S_{\mu}B^*\ot B$ 
 and project to one irreducible component,
say $S_{\hat \mu}B^*$, where $\hat \mu$ is obtained by deleting a box from $\mu$.
  The upshot is a tensor
 $$
T'\in S_{\tilde \pi}A\ot S_{\mu}  B  \ot S_{\tilde \nu }C\ot S_{\pi}A^*\ot S_{\hat \mu} B^*\ot S_{\nu} C^*
$$ 
which we may then consider as a linear map, e.g.,
 $$
T': S_{\pi}A \ot S_{\mu} B^* \ot S_{ \nu }C \ra S_{\tilde \pi}A\ot S_{\hat \mu }B^*\ot  S_{\tilde\nu } C 
$$ 
and rank conditions on $T'$ {may} give border rank conditions on $T$.

\smallskip

{ Returning to the minors of  \eqref{tadef}, the   minors of size $t+1$ of $T_A^{\ww p}$ give modules of equations } which are contained in
\be\label{latp1}
\La{t+1}(\La pA \ot B^*)\ot \La{t+1}(\La{p+1}A^*\ot C^*)
=\bigoplus_{|\mu|=t+1,\ |\nu|=t+1}
S_{\mu}(\La p A )\ot S_{\mu'}B^*\ot S_{\nu}(\La{p+1}A^*)\ot S_{\nu'}C^*.
\ene 
{Determining    which irreducible submodules
of \eqref{latp1} actually contribute nontrivial equations  appears to be difficult.
}

\section{Proof of Theorem \ref{newmainthm}}\label{nmainpf}
Let $M,N,L$ be vector spaces of dimensions $\mmm,\nnn,\lll$.
Write $A=M\ot N^*$, $B=N\ot L^*$, $C=L\ot M^*$, so
$\aaa=\mmm\nnn$, $\bbb=\nnn\lll$, $\ccc=\mmm\lll$.
The matrix multiplication operator $M_{<\mmm,\nnn,\lll>}$
is $M_{<\mmm,\nnn,\lll>}=Id_M\ot Id_N\ot Id_L\in A\ot B\ot C$.
{(See \cite[\S 2.5.2]{Ltensor} for an explanation of this identification.)}
 Let $U=N^*$.
 {
 Then
 \be\label{mmmap}
 (M_{\langle \mmm,\nnn,\lll\rangle })_{A}^{\ww p}\colon 
L\otimes U\ot\wedge^p(M\ot U)\to L\ot M^*\ot\wedge^{p+1}(M\ot U).
\ene 
This is just the identity map on the $L$ factor, so we may
write $M_{\mmm,\nnn,\lll}^{\ww p}=\psi_p\ot Id_L$, where 
\be\label{psip}\psi_p: \La p (M\ot U)\ot U\ra M^*\ot \La{p+1}(M\ot U).
\ene
}

{ The essential idea  } is to choose a subspace $A'\subset M\ot U$  on which the \lq\lq restriction\rq\rq\  of $\psi_p$ becomes injective for $p=\nnn-1$.
Take a vector space $W$ of dimension $2$, and fix
isomorphisms $U\simeq S^{\nnn-1}W^{*}$, $M\simeq  S^{\mmm-1}W^{*}$  . Let $A'$ be the {$SL(W)$-}direct summand 
$S^{\mmm+\nnn-2}W^{*}\subset S^{\nnn-1}W^{*}\ot S^{\nnn-1}W^{*}=M\ot U$.

Recall  that $S^{\alpha}W$ may be interpreted as the space of homogenous polynomials of degree $\a$ in   two variables.
If $f\in S^{\alpha}W$ and $g\in S^{\beta}W^{*}$ (with $\beta\le\alpha$) then we can perform  the contraction $g\cdot f\in S^{\alpha-\beta}W$.
In the case $f=l^\alpha$ is the power of a linear form $l$, then the contraction $g\cdot l^\alpha$
equals $l^{\alpha-\beta}$ multiplied by the value of $g$ at the point $l$,
so that  $g\cdot l^\alpha=0$ if and only if $l$ is a root of $g$.

Consider the natural skew-symmetrization map
\be\label{r1} A'\otimes\wedge^{\nnn-1}(A') \rig{} \wedge^{\nnn}(A').
\ene
{Because $SL(W)$ is reductive, there is a unique   $SL(W)$-}complement $A''$ to $A'$, so the projection $M\ot U\ra A'$ is well defined. Compose \eqref{r1} with the projection
\be\label{r2} M\otimes U\otimes\wedge^{\nnn-1}(A')\rig{} A'\otimes\wedge^{\nnn}(A')
\ene
to obtain
\be\label{r3}
M\otimes U\otimes\wedge^{\nnn-1}(A') \rig{} \wedge^{\nnn}(A').
\ene
Now \eqref{r3} gives a map
\be\label{r4}
 \psi_p':  U\otimes \wedge^{\nnn-1}(A') \rig{} M^{*}\otimes\wedge^{\nnn}(A').
\ene
We claim \eqref{r4} is injective.
(Note that when $\nnn=\mmm$ the source and target space of  \eqref{r4} are dual to each other.)

Consider the transposed map  $S^{\mmm-1}W^{*}\ot \wedge^{\nnn}S^{\mmm+\nnn-2}W\ra S^{\nnn-1}W\ot \wedge^{\nnn-1} S^{\mmm+\nnn-2}W$.
It is defined as follows on decomposable elements (and then extended by linearity):

$$g\ot(f_1\wedge\cdots\wedge f_{\nnn})\mapsto\sum_{i=1}^{\nnn}(-1)^{i-1} g(f_i)\ot f_1\wedge\cdots\hat{f_i}\cdots\wedge f_{\nnn}$$

We show this dual map is surjective. Let $l^{\nnn-1}\ot (l_1^{\mmm+\nnn-2}\wedge\cdots\ww  l_{\nnn-1}^{\mmm+\nnn-2})\in S^{\nnn-1}W\ot \wedge^{\nnn-1} S^{\mmm+\nnn-2}W$
  with $l_i\in W$. Such elements span the target so it will be sufficient to show any such element  is in the image.
 Assume first that
$l$ is distinct from the $l_i$.
Since $\nnn\le\mmm$, there is a polynomial $g\in S^{\mmm-1}W^{*}$ 
which vanishes on $l_1,\ldots , l_{\nnn-1}$ and   is nonzero on $l $.
Then, up to a nonzero scalar, 
$g\ot(l_1^{\mmm+\nnn-2}\wedge\cdots\wedge l_{\nnn-1}^{\mmm+\nnn-2}\wedge l^{\mmm+\nnn-2})$ maps to our element.

Since the image is closed (being a linear space), the condition that
$l$ is distinct from the $l_i$ may be removed  {by taking limits}. 

Finally,  $\psi_p'\ot {Id}_L\colon B^*\ot\wedge^{\nnn-1}A'\to C\ot\wedge^\nnn A'$ is  the map induced  {from the restricted matrix multiplication} operator.

To complete the proof of  Theorem \ref{newmainthm}, observe that 
an element of rank one in $A'\ot B\ot C$ induces a map
{$B^*\ot \La{n-1}A'\ra C\ot \La n A'$} of rank ${{\nnn+\mmm-2}\choose{\nnn-1}}$.

{By Lemma \ref{proja} below,  the border rank of $\Mnl$  must be at least the border rank  of  
$T'\in A'\ot B\ot C$, and  by Theorem \ref{resume}  
$$\ur(T') \ge\frac{\dim B^*\otimes \wedge^{\nnn-1}(A') }{{{\nnn+\mmm-2}\choose{\nnn-1}}}=
 \nnn\lll\frac{{{\nnn+\mmm-1}\choose{\nnn-1}}}{{{\nnn+\mmm-2}\choose{\nnn-1}}} = \frac{\nnn\lll\left(\nnn+\mmm-1\right)}{\mmm}.
$$
 }
 This concludes the proof of Theorem \ref{newmainthm}.

\begin{lemma}\label{proja} Let $T\in A\ot B\ot C$, let $A=A'\oplus A''$ and let $\pi\colon A\to A'$ be the 
linear projection,
which induces $\tilde\pi\colon A\ot B\ot C\to A'\ot B\ot C$. {Then  $\bold R(T)\ge \bold R(\tilde\pi(T))$ and  $\ur(T)\ge \ur(\tilde\pi(T))$.}
\end{lemma}
\begin{proof}
If $T=\sum_{i=1}^ra_i\ot b_i\ot c_i$ then
$\tilde\pi(T)=\sum_{i=1}^r\pi(a_i)\ot b_i\ot c_i$.
\end{proof}

\begin{remark} If we let $B'=U$, $C'=M$, then in the proof above we {are} just computing {the rank of} $(T')_A^{\ww p}$ where
  $T'\in A\ot B'\ot C'$ is $Id_U\ot Id_M$.   {  The maximal border rank of a tensor $T$ in
$\BC^{\mmm\nnn}\ot \BC^{\mmm}\ot \BC^{\nnn}$ is $\mmm\nnn$ which occurs anytime the map $T:\BC^{\mmm\nnn*}\ra \BC^{\mmm}\ot \BC^{\nnn}$
is injective, so $T'$}   is a generic tensor
 in $A\ot B'\ot C'$,  and the   calculation of $\trank(\psi_p')$ is  determining the maximal rank of
$(T')_A^{\ww p}$ for a generic element of $\BC^{\mmm\nnn}\ot \BC^{\nnn}\ot \BC^{\mmm}$.
Also note that {the projection $A\to A'$}, viewed as  linear map {$S^{\nnn-1}W^* \ot S^{\nnn-1}W^* \ra    S^{\mmm+\nnn-2}W^*$} is just polynomial  multiplication. 
\end{remark}



\section{The kernel through representation theory}\label{MMsect}

We compute the kernel of the map  \eqref{yfmap} as a module and give a formula for its dimension as an alternating sum of products
of binomial coefficients. The purpose of this section  is to show that there are nontrivial equations for tensors of border rank
less than $2\nnn^2$ that matrix multiplication {\it does} satisfy, and to develop a description of the kernel that, we hope, will
be useful for future research.

\subsection{The kernel as a module}   Assume  $\bbb\leq \ccc$, so $\nnn\leq \mmm$.
For a partition $\pi=(\pi_1\hd \pi_N)$, let $\ell(\pi)$ denote the number of parts of $\pi$,
i.e., the largest $k$ such that $\pi_k>0$. Let $\pi'$ denote the conjugate
partition to $\pi$. See \S\ref{glwreps} for the definition of $S_{\pi}U$.

\begin{example}\label{exap4}{C}onsider the case $\mmm=\nnn=3$,
take $p=4$. Let
$$\alpha_1=\begin{matrix}\yng(2,1,1)\end{matrix},\quad \alpha_2=\begin{matrix}\yng(2,2)\end{matrix},\quad \alpha_3=\begin{matrix}\yng(3,1)\end{matrix}$$
Note that $\alpha_1=\alpha_3'$, $\alpha_2=\alpha_2'$.
Then (see \S\ref{decompfor})
 $$
 \wedge^4(M\ot U)=\left(S_{\alpha_3}M\ot S_{\alpha_1}U\right)\oplus
\left(S_{\alpha_2}M\ot S_{\alpha_2}U\right)\oplus
\left(S_{\alpha_1}M\ot S_{\alpha_3}U\right).
$$
Observe that (via the Pieri rule \S\ref{pieris})
\vskip 0.5cm

$\begin{matrix}\yng(2,1,1)\end{matrix}\otimes\begin{matrix}\yng(1)\end{matrix}=\begin{matrix}\yng(3,1,1)\end{matrix}\oplus\begin{matrix}\yng(2,2,1)\end{matrix}$
\vskip 0.5cm

$\begin{matrix}\yng(2,2)\end{matrix}\otimes\begin{matrix}\yng(1)\end{matrix}=\begin{matrix}\yng(3,2)\end{matrix}\oplus\begin{matrix}\yng(2,2,1)\end{matrix}$
\vskip 0.5cm

$\begin{matrix}\yng(3,1)\end{matrix}\otimes\begin{matrix}\yng(1)\end{matrix}=\begin{matrix}\yng(4,1)\end{matrix}\oplus\begin{matrix}\yng(3,2)\end{matrix}\oplus\begin{matrix}\yng(3,1,1)\end{matrix}.$

Among the seven summands on the right-hand side, only $\begin{matrix}\yng(4,1)\end{matrix}$ {does not fit in} the $3\times 3$ square.
{The} kernel of  $M_{3,3,\lll}^{\ww 4}$  in this case is  $L^{*}\ot S_{2,1,1}M\ot S_{4,1}U$,   corresponding to  
 
$\pi=\begin{matrix}\yng(3,1)\end{matrix}\quad\pi+(1)=\begin{matrix}\yng(4,1)\end{matrix}$ 
 which has dimension $\lll \cdot 24\cdot 3=72\lll $.

Let's show that the other two summands in $L^{*}\ot S_{\alpha_1}M\ot S_{\alpha_3}U\ot U$, which are
$L^{*}\ot S_{2,1,1}M\ot S_{3,1,1}U$ and $L^{*}\ot S_{2,1,1}M\ot S_{3,2}U$
are mapped to nonzero elements.

We have (forgetting the identity on $L^*$), the weight vector
\vskip 0.5cm

\newcommand{\muno}{{\small m_1}}
\newcommand{\mdue}{m_2}
\newcommand{\mtre}{m_3}
\newcommand{\mib}{m_i}
\newcommand{\mii}{m^i}
\newcommand{\uuno}{u_1}
\newcommand{\udue}{u_2}
\newcommand{\utre}{u_3}

$\Yboxdim{16pt}\begin{matrix}\young(\muno\muno,\mdue,\mtre)\end{matrix}\otimes\begin{matrix}\young(\uuno\uuno\uuno,\udue,\utre)\end{matrix}$

going to
$\displaystyle\Yboxdim{16pt}\sum_{i=1}^4\begin{matrix}\young(\mii)\end{matrix}\otimes\begin{matrix}\young(\muno\muno\mib,\mdue,\mtre)\end{matrix}\otimes
\begin{matrix}\young(\uuno\uuno\uuno,\udue,\utre)\end{matrix}$, which is nonzero

and the weight vector $\Yboxdim{16pt}\begin{matrix}\young(\muno\muno,\mdue,\mtre)\end{matrix}\otimes\begin{matrix}\young(\uuno\uuno\uuno,\udue\udue)\end{matrix}$
going to

$\displaystyle\Yboxdim{16pt}\sum_{i=1}^4\begin{matrix}\young(\mii)\end{matrix}\otimes\begin{matrix}\young(\muno\muno,\mdue\mib,\mtre)\end{matrix}\otimes
\begin{matrix}\young(\uuno\uuno\uuno,\udue\udue)\end{matrix}$ which is nonzero, too.

Hence the rank of $M_{\langle 3,3,\lll \rangle}^{\ww 4}$ is $3\lll \cdot{9\choose 4}-72\lll = 306\lll$
and $\ur(M_{\langle 3,3,\lll  \rangle}) \ge \lceil\frac{306\lll}{{8\choose 4}}\rceil=\lceil\frac{306\lll}{70}\rceil$ 
which coincides with Lickteig's bound of $14$ when $\lll=3$.  
\end{example}

\begin{lemma}\label{kerislem}  {$\tker  (M_{\langle \mmm,\nnn,\lll\rangle })_{A}^{\ww p}= \oplus_{\pi}  S_{\pi'}M\ot S_{\pi+(1)}U\ot L$ }where
the summation is over partitions $\pi=(\mmm,\nu_1\hd \nu_{\nnn-1})$ where 
$\nu=(\nu_1\hd \nu_{\nnn-1})$ is a partition of $p-\mmm$, $\nu_1\leq \mmm$ and
$\pi+(1)=(\mmm+1,\nu_1\hd \nu_{\nnn-1})$.
\end{lemma}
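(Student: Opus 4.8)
The plan is to identify the kernel of $\psi_p\colon \wedge^p(M\ot U)\ot U\ra M^*\ot\wedge^{p+1}(M\ot U)$ (tensored with $\mathrm{Id}_L$) by exploiting $GL(M)\times GL(U)$-equivariance. First I would decompose both source and target into isotypic components. Using the skew Cauchy formula (as cited via \S\ref{decompfor}), $\wedge^p(M\ot U)=\bigoplus_{|\lambda|=p}S_{\lambda'}M\ot S_\lambda U$, where the sum runs over partitions $\lambda$ with $\ell(\lambda)\le\dim U=\nnn$ and $\ell(\lambda')\le\dim M=\mmm$, i.e. $\lambda_1\le\mmm$. Tensoring with $U$ and applying the Pieri rule on the $U$-factor, the source becomes $\bigoplus_{\lambda}\bigoplus_{\mu=\lambda+\square}S_{\lambda'}M\ot S_\mu U$. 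Similarly $M^*\ot\wedge^{p+1}(M\ot U)=\bigoplus_{|\nu|=p+1}M^*\ot S_{\nu'}M\ot S_\nu U$, and decomposing $M^*\ot S_{\nu'}M$ by the Pieri rule (co-Pieri: removing a box from $\nu'$), the target is $\bigoplus_{\nu}\bigoplus_{\rho'=\nu'-\square}S_{\rho'}M\ot S_\nu U$. A summand $S_{\rho'}M\ot S_\mu U$ of the source can only map to the summand with the same labels in the target, and by Schur's lemma the map on each such matched pair is a scalar; the kernel is the sum of those matched pairs on which the scalar vanishes.

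Next I would pin down which matched summands survive. A pair $(S_{\pi'}M, S_\pi U)$ with $|\pi|=p$ appears in the source via some $\lambda$ with $\lambda+\square=\pi$ and $\lambda'$ fitting in $\mmm$ columns, hence $\lambda_1\le\mmm$; it appears in the target via some $\nu$ with $\nu-\square$ (in conjugate form) equal to $\pi'$ equivalently $\pi+\square=\nu$, and the constraint there is $\ell(\nu')\le\mmm$ automatically and $\ell(\nu)\le\nnn$. The key point is the constraint $\lambda_1\le\mmm$ versus whether $\pi$ itself, or $\pi$ plus a box, fits in $\mmm$ columns. If $\pi_1\le\mmm$ then (arguing with the explicit weight-vector computation already illustrated in Example~\ref{exap4}) the map is nonzero on that component — one exhibits a highest weight vector and checks its image is a nonzero sum of standard tableaux. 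If $\pi_1=\mmm+1$, then for the summand to occur in the source we need $\lambda=\pi-\square$ with $\lambda_1\le\mmm$, which forces us to remove the box from the first row, so $\lambda=(\mmm,\nu_1,\dots,\nu_{\nnn-1})-\,$(nothing in first row)$\,\dots$; more precisely $\pi=(\mmm+1,\nu_1,\dots,\nu_{\nnn-1})$ and the unique box of $\pi$ lying outside the $\mmm\times\nnn$ rectangle forces the generator of $S_\pi U$ inside $\wedge^{p+1}$ to be killed — exactly the mechanism seen in Example~\ref{exap4} where only $\yng(4,1)$ "does not fit." So the kernel is precisely $\bigoplus_\pi S_{\pi'}M\ot S_{\pi+(1)}U\ot L$ over $\pi=(\mmm,\nu_1,\dots,\nu_{\nnn-1})$ with $\nu$ a partition of $p-\mmm$, $\nu_1\le\mmm$, $\ell(\nu)\le\nnn-1$.

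Concretely the steps are: (i) write down the two isotypic decompositions via skew-Cauchy and Pieri; (ii) observe equivariance collapses $\psi_p$ to a scalar on each matched irreducible pair; (iii) compute that scalar on a highest weight vector, showing it is nonzero exactly when the $U$-label $\pi$ satisfies $\pi_1\le\mmm$ (equivalently $S_\pi U$ already occurs in $\wedge^p(M\ot U)$ after the Pieri move without leaving the rectangle), and zero precisely when $\pi$ is $(\mmm+1,\nu_1,\dots,\nu_{\nnn-1})$; (iv) translate the indexing: writing the kernel summand as $S_{\pi'}M\ot S_{\pi+(1)}U$ with $\pi=(\mmm,\nu)$ matches the stated form since $(\mmm,\nu)+(1)=(\mmm+1,\nu)$ and $(\mmm,\nu)' $ has first column of length $\ell(\pi)\le\nnn$, fitting in $M$.

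The main obstacle is step (iii): showing the scalar is \emph{nonzero} whenever $\pi_1\le\mmm$. Equivariance alone gives a scalar but not its non-vanishing. The robust way is the highest-weight-vector calculation sketched in Example~\ref{exap4}: realize $S_{\pi'}M\ot S_\pi U$ inside $\wedge^p(M\ot U)\ot U$ by an explicit semistandard-tableau element, apply $\psi_p$ (skew-symmetrize the new $M$-index into $M^*\ot\wedge^{p+1}$), and verify the resulting alternating sum of tableaux is not identically zero — the obstruction to vanishing is exactly the availability of a box to add to the $U$-shape that keeps it within $\nnn$ columns-worth of the ambient rectangle, which fails only in the excluded case. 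One must also confirm that each admissible pair $(S_{\pi'}M,S_\pi U)$ with $\pi_1\le\mmm$ occurs with multiplicity one in both source and target, so that "scalar" is the whole story; this follows from multiplicity-freeness of the relevant skew-Cauchy/Pieri expansions in the two-step range we are in, though it requires a short check that no partition arises in two different ways from the Pieri moves (ruled out because adding then removing, or removing then adding, a single box retrieves $\pi$ uniquely once the rectangle constraints are imposed).
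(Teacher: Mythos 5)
Your proposal is correct and follows essentially the same route as the paper's proof: decompose source and target as $GL(M)\times GL(U)$-modules via the skew Cauchy formula and the Pieri rule, use multiplicity-freeness plus Schur's lemma to conclude that exactly the components whose $U$-label has first part $\mmm+1$ have no counterpart in the target (these are the lemma's summands), and verify non-vanishing on all remaining components by a weight-vector computation as in Example \ref{exap4}. The one point to tighten is the step you flag as the main obstacle: the "availability of a box" you invoke is really the existence criterion for the matching target component, while the actual non-vanishing (as in the paper) comes from noting that the image of a weight vector is a sum of terms tagged by distinct dual basis vectors $m^i\in M^*$, so it is nonzero as soon as a single basis vector placed in the added box of the $M$-diagram yields a nonzero tableau, which holds because the new $M$-shape still has at most $\mmm$ rows.
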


{\begin{proof}
Write $M_{\mmm,\nnn,\lll}^{\ww p}=\psi_p\ot Id_L$, where 
$\psi_p: \La p (M\ot U)\ot U\ra M^*\ot \La{p+1}(M\ot U)$.
Such modules are contained in the kernel by Schur's lemma, as there is no corresponding module in the target for it to map to
 


{We} show that all   {other}  modules in $\La p (M\ot U)\ot U$ are not
in the kernel by computing $\psi_p$ {at  weight} vectors. 
Set  $T'=Id_U\ot Id_M$, so  $\psi_p=(T')_A^{\ww p}$.
Write $T'=(u^i\ot m_\a)\ot m^{\a}\ot u_i$, where  {$1\leq i\leq \nnn$, $1\leq\a\leq \mmm$},
  $(u^i)$ is the dual basis to $(u_i)$ and similarly for $(m^\a)$ and $(m_\a)$, and the summation
convention is used throughout.
Then
$$
T'\ot Id_{\La pA}=(u^i\ot m_\a)\ot m^{\a}\ot u_i\ot [(u_{j_1}\ot m^{\b_1})
\ww\cdots\ww (u_{j_p}\ot m^{\b_p})]
\ot 
[(u^{j_1}\ot m_{\b_1})
\ww\cdots\ww (u^{j_p}\ot m_{\b_p})]
$$
and
$$
(T')_A^{\ww p}=
[(u_{j_1}\ot m^{\b_1})
\ww\cdots\ww (u_{j_p}\ot m^{\b_p})]
\ot u_i\bigotimes 
  m^{\a}\ot  
[(u^{j_1}\ot m_{\b_1})
\ww\cdots\ww (u^{j_p}\ot m_{\b_p})\ww (u^i\ot m_\a)]
$$
Note that all the summands of the decomposition (see (\ref{cauchy}))
 $\displaystyle\La p (M\ot U)=\oplus_{|\alpha|=p}(S_{\alpha}M\otimes S_{\alpha'}U)$ are multiplicity free,
it follows that also the summands of $\displaystyle\La p (M\ot U)\ot U$ are multiplicity free.
So we can compute $(T')_A^{\ww p}$ considering one weight vector for any irreducible summand.
Under the mapping of a weight vector, nothing happens to the $U$ component. The $M$ component gets
tensored with the identity map and then projected onto the component that gives the conjugate diagram to
the one of $U$.
So it just remains to see this projection is nonzero. But the projection is just as in Example \ref{exap4}, we add a box
in the appropriate place and sum over basis vectors. Since the diagram will be one that produces a nonzero
module for $M$, at least one basis vector can be placed in the new box to yield a nonzero Young tableaux.
But we are summing over all basis vectors.
\end{proof}
 
\subsection{{Dimension of the kernel}} \label{rbndpfsect}
 We compute the dimension  of  $\tker  (M_{\langle \mmm,\nnn,\lll\rangle })_{A}^{\ww p}=
\tker \psi_p\ot Id_L$  via an exact sequence. 
We continue the notations of above.
Consider the map
\begin{align}
\psi_{p,2}: \La{p-\mmm}(M\ot U)\ot \La \mmm M\ot S^{\mmm+1}U&\ra \La p (M\ot U)\ot U\\
T\ot m_1\ww\cdots\ww m_{\mmm}\ot u^{\mmm+1}&\mapsto T\ww (m_1\ot u)\ww\cdots{\ww} (m_{\mmm}\ot u)\ot u.
\end{align}

\begin{lemma}\label{kerpsiplem}
$\tim \psi_{p,2}=\tker \psi_p$.
\end{lemma}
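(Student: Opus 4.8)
The plan is to prove the two inclusions $\tim \psi_{p,2}\subseteq \tker \psi_p$ and $\tker\psi_p\subseteq \tim\psi_{p,2}$ separately, the first by a direct computation and the second by a dimension/module count leveraging Lemma \ref{kerislem}.

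For the inclusion $\tim\psi_{p,2}\subseteq \tker\psi_p$, I would take a decomposable element $T\ot m_1\ww\cdots\ww m_\mmm\ot u^{\mmm+1}$ in the source of $\psi_{p,2}$ and apply $\psi_p$ to its image. Recall that $\psi_p$ (which is $(T')_A^{\ww p}$ for $T'=Id_U\ot Id_M$) acts on $\La p(M\ot U)\ot U$ by tensoring with the identity on $M$ and then skew-symmetrizing the new factor $(u^i\ot m_\a)$ into the $\La p(M\ot U)$ slot, while leaving the $U$-component untouched. Applied to $\psi_{p,2}(T\ot m_1\ww\cdots\ww m_\mmm\ot u^{\mmm+1})$, one of the $U$-factors being contracted is $u$, and the wedge factor already contains $(m_1\ot u)\ww\cdots\ww(m_\mmm\ot u)$. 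Skew-symmetrizing $(u\ot m_\a)$ against these $\mmm$ factors $(m_j\ot u)$ forces, after expanding $Id_M=\sum_\a m^\a\ot m_\a$, a wedge of $\mmm+1$ vectors all of the form $(\cdot\ot u)$ lying in the $\mmm$-dimensional space $M\ot u\cong M$; hence every such term vanishes. So $\psi_p\circ\psi_{p,2}=0$, giving the first inclusion. I expect this to be the routine direction, though care is needed to track signs and the Pieri projection correctly.

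For the reverse inclusion, the key observation is that $\psi_{p,2}$ is a map of $GL(M)\times GL(U)$-modules, so its image is a sum of isotypic components, and by Lemma \ref{kerislem} we know $\tker\psi_p=\oplus_\pi S_{\pi'}M\ot S_{\pi+(1)}U\ot L$ with $\pi=(\mmm,\nu)$, $\nu\vdash(p-\mmm)$, $\nu_1\le\mmm$. It therefore suffices to show that each such irreducible $S_{\pi'}M\ot S_{\pi+(1)}U$ actually occurs in $\tim\psi_{p,2}$. The source $\La{p-\mmm}(M\ot U)\ot\La\mmm M\ot S^{\mmm+1}U$ decomposes (via the Cauchy formula \eqref{cauchy}, $\La\mmm M=S_{1^\mmm}M=S_{(1^\mmm)}M$, and the Pieri rule) into a sum of $S_\lambda M\ot S_\kappa U$; I would identify, for each target partition pair $(\pi',\pi+(1))$, a summand in the source mapping onto it nontrivially. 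The natural candidate: note that $\La\mmm M$ contributes the column $(1^\mmm)$ to the $M$-side, which is exactly what distinguishes $\pi'$ (whose first part is $\pi'_1$ with $\pi$ having first row $\mmm$, so $\pi'$ has a column of length forced by $\mmm$); dually $S^{\mmm+1}U$ contributes the extra box giving $\pi+(1)$ from $\pi$ in the $U$-side. So I would pick the component $S_{\nu'}M\ot S_\nu U\subset\La{p-\mmm}(M\ot U)$, tensor with $\La\mmm M\ot S^{\mmm+1}U$, and use Pieri to land on $S_{(\mmm,\nu)'}M\ot S_{(\mmm+1,\nu)}U=S_{\pi'}M\ot S_{\pi+(1)}U$, checking the relevant Littlewood--Richardson/Pieri coefficient is nonzero and that the explicit weight-vector image (as in Example \ref{exap4}) is nonzero.

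The main obstacle will be the reverse inclusion: specifically, verifying that the composite projection — from $\La{p-\mmm}(M\ot U)\ot\La\mmm M\ot S^{\mmm+1}U$ through $\psi_{p,2}$ and then onto the isotypic component $S_{\pi'}M\ot S_{\pi+(1)}U$ in $\La p(M\ot U)\ot U$ — is genuinely nonzero and not killed by a sign cancellation in the skew-symmetrization. The cleanest route is probably to avoid abstract Pieri bookkeeping and instead exhibit an explicit highest-weight (or extremal-weight) vector in the source whose image under $\psi_{p,2}$, viewed inside $\La p(M\ot U)\ot U$, has nonzero projection to the $\pi$-component, exactly in the style of the weight-vector computations in Example \ref{exap4} and in the proof of Lemma \ref{kerislem}; then multiplicity-freeness of the relevant decompositions (already noted in that proof) finishes the argument. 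An alternative, if the explicit computation proves unwieldy, is to combine the first inclusion with a dimension count — compute $\dim\tim\psi_{p,2}=\dim(\text{source})-\dim\tker\psi_{p,2}$ and match it against $\dim\tker\psi_p$ from Lemma \ref{kerislem} — but that merely defers the real work to the companion computation of $\tker\psi_{p,2}$, which is presumably the subject of the next lemma.
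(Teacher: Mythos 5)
Your proposal is correct, and for the substantive inclusion $\tker\psi_p\subseteq\tim\psi_{p,2}$ it is essentially the paper's own argument: use $GL(M)\times GL(U)$-equivariance, decompose the source of $\psi_{p,2}$ via the Cauchy formula and Pieri, and verify by a weight-vector computation in the style of Example \ref{exap4} and Lemma \ref{kerislem} that each constituent $S_{\pi'}M\ot S_{\pi+(1)}U$ of the kernel, with $\pi=(\mmm,\nu)$, is hit; multiplicity-freeness of $\La p(M\ot U)\ot U$ then forces equality. (Like the paper, you leave that nonvanishing check at the level of a sketch, so no extra rigor is gained or lost there.) Where you genuinely differ is the inclusion $\tim\psi_{p,2}\subseteq\tker\psi_p$: you prove $\psi_p\circ\psi_{p,2}=0$ directly, noting that applying $\psi_p$ to $T\ww(m_1\ot u)\ww\cdots\ww(m_{\mmm}\ot u)\ot u$ wedges in a further vector $m_\alpha\ot u$, producing $\mmm+1$ vectors in the $\mmm$-dimensional subspace $M\ot u$, so every term vanishes; this computation is valid (it suffices to check it on the spanning elements with $u^{\mmm+1}$ a power of a linear form). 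The paper instead gets this containment as a byproduct of the equivariant analysis: each source summand maps either to zero or onto one of the modules already identified as lying in $\tker\psi_p$ by Lemma \ref{kerislem}. Your direct computation is a clean alternative for that half, independent of the module identification, while the paper's route records slightly more information, namely exactly which source summands map to zero (those with $\ell(\nu)=\nnn$).
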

\begin{proof}
Observe that
$$
\psi_p: \bigoplus_{\stack{|\pi|=p,\ell(\pi)\leq \nnn}{\pi_1\leq \mmm}}
S_{\pi}U\ot U\ot S_{\pi'}M
\ra
  \bigoplus_{\stack{|\mu|=p+1,\ell(\mu)\leq \mmm }{\mu_1\leq \nnn}}
S_{\mu}M\ot M^*\ot S_{\mu'}U
$$
  {is}  a $GL(U)\times GL(M)$-module map.
Now the source of $\psi_{p,2}$ is
$$
\bigoplus_{\stack{|\nu|=p-\mmm, \nu_1\leq \mmm}{\ell(\nu)\leq \nnn}}
S_{\nu'}M\ot S_{\nu}U\ot S^{\mmm+1}U\ot \La \mmm M
$$
and a given   module in the source with {$\nu_{\nnn}=0$} maps to
$S_{\pi+(1)}U\ot S_{\pi'}M\subset S_{\pi}U\ot U\ot S_{\pi'}M$ where $\pi=({\mmm},\nu_1\hd\nu_{\nnn-1})$, the proof is
similar to the proof of Lemma \ref{kerislem}. Its other components map to zero.
\end{proof}

 The kernel of  $\psi_{p,2}$ is  
the image of
\begin{align}
\psi_{p,3}: \La{p-{\mmm}-1}(M\ot U)\ot \La {\mmm} M\ot M\ot S^{{\mmm}+2}U&\ra \La{p-{\mmm}}(M\ot U)\ot \La {\mmm} M\ot S^{{\mmm}+1}U\\
\nonumber T\ot m_1\ww\cdots\ww m_{\mmm}\ot m\ot u^{{\mmm}+2}&\mapsto T\ww (m \ot u)\ot m_1\ww\cdots\ww m_{\mmm} \ot u^{{\mmm}+1}
\end{align}
and $\psi_{p,3}$ has kernel the image of
\begin{align}
\psi_{p,4}: \La{p-{\mmm}-2}(M\ot U)\ot \La {\mmm} M\ot S^2M\ot S^{{\mmm}+3}U&\ra \La{p-{\mmm}-1}(M\ot U)\ot \La {\mmm} M\ot M\ot S^{{\mmm}+2}U\\
\nonumber T\ot m_1\ww\cdots\ww m_{\mmm}\ot m^2\ot u^{{\mmm}+3}&\mapsto T\ww (m\ot u)\ot m_1\ww\cdots\ww m_{\mmm} \ot m\ot u^{{\mmm}+2}
\end{align}
One defines analogous maps $\psi_{p,k}$.
By taking the Euler characteristic we obtain:
\begin{lemma}  \label{kerpsipsumlem}
$$
\tdim\tker\psi_p=
\sum_{j=0}^{p-\mmm } (-1)^j\binom{\mmm\nnn}{p-\mmm-j}
\binom{\mmm+j-1}j\binom{\mmm+\nnn+j}{\mmm+j+1}.
$$
 \end{lemma}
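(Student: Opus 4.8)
The plan is to exhibit $\psi_{p,2},\psi_{p,3},\psi_{p,4},\dots$ as the terms of a finite exact complex resolving $\tker\psi_p$, and then to read off $\tdim\tker\psi_p$ as its Euler characteristic. Continuing the pattern of the maps displayed above, for $j\ge 0$ set
$$
E_j:=\La{p-\mmm-j}(M\ot U)\ot\La \mmm M\ot S^jM\ot S^{\mmm+1+j}U ,
$$
and let $\psi_{p,j+2}\colon E_j\ra E_{j-1}$ be the map that splits off $m\ot u$ factors from the $\La \mmm M\ot S^jM\ot S^{\mmm+1+j}U$ part and wedges them into the exterior factor, exactly as $\psi_{p,2},\psi_{p,3},\psi_{p,4}$ are written above (so that $\psi_{p,2}$ maps to $E_{-1}:=\La p(M\ot U)\ot U$ with image $\tker\psi_p$, by Lemma \ref{kerpsiplem}). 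Since $\La{p-\mmm-j}(M\ot U)=0$ once $j>p-\mmm$, this is a finite sequence, and the claim is that
$$
0\ra E_{p-\mmm}\ra\cdots\ra E_0\ra\tker\psi_p\ra 0
$$
is exact.

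First I would establish exactness, i.e. $\tim\psi_{p,j+2}=\tker\psi_{p,j+1}$ for every $j\ge 0$; exactness at $E_{p-\mmm}$ then comes for free, since $E_{p-\mmm+1}=0$ forces $\psi_{p,p-\mmm+2}$ to be injective. The case $j=0$ is Lemma \ref{kerpsiplem}, the cases $j=1,2$ are the assertions recorded just before the statement, and the general case is handled by the same argument as in the proof of Lemma \ref{kerpsiplem}: decompose $E_j$ and $E_{j-1}$ into $GL(M)\times GL(U)$-irreducibles via the Cauchy formula \eqref{cauchy}, note that these decompositions are multiplicity free, and evaluate $\psi_{p,j+2}$ on a single highest weight vector of each irreducible summand of $E_j$; the box-adding bookkeeping then pins down exactly which summands of $E_{j-1}$ occur in the image. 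This uniform verification is the main obstacle, since the paper only records the first few instances and leaves the rest as ``analogous'': the real work is to run the induction carefully --- identifying the irreducibles appearing at each stage, checking that the relevant multiplicities stay $\le 1$ so that a single weight-vector computation decides membership in the image, and ruling out unexpected surviving summands. (Alternatively, one could attempt to recognize $(E_\bullet,\psi_{p,\bullet})$ as an Eagon--Northcott/Buchsbaum--Rim-type complex whose exactness is already known, and quote that.)

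Granting exactness, taking the Euler characteristic of the complex above gives
$$
\tdim\tker\psi_p=\sum_{j=0}^{p-\mmm}(-1)^j\tdim E_j ,
$$
and it only remains to compute $\tdim E_j$ factor by factor. Since $\dim (M\ot U)=\mmm\nnn$ we have $\dim\La{p-\mmm-j}(M\ot U)=\binom{\mmm\nnn}{p-\mmm-j}$; since $\dim M=\mmm$ we have $\dim\La \mmm M=1$ and $\dim S^jM=\binom{\mmm+j-1}{j}$; and since $\dim U=\nnn$ we have $\dim S^{\mmm+1+j}U=\binom{\nnn+\mmm+j}{\mmm+j+1}$. Multiplying, $\tdim E_j=\binom{\mmm\nnn}{p-\mmm-j}\binom{\mmm+j-1}{j}\binom{\mmm+\nnn+j}{\mmm+j+1}$, which is precisely the $j$-th term of the asserted sum, completing the proof. (Sanity check: for $\mmm=\nnn=3$, $p=4$ this gives $\binom{9}{1}\binom{2}{0}\binom{6}{4}-\binom{9}{0}\binom{3}{1}\binom{7}{5}=135-63=72$, in agreement with Example \ref{exap4}.)
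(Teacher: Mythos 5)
Your proposal follows the paper's own route essentially verbatim: the paper defines the same maps $\psi_{p,k}$ with sources $\Lambda^{p-\mmm-j}(M\otimes U)\otimes\Lambda^{\mmm}M\otimes S^{j}M\otimes S^{\mmm+1+j}U$, asserts that the image of each map is the kernel of the next (proving in detail only the first instance, Lemma \ref{kerpsiplem}, and dismissing the rest as ``analogous''), and then reads off the formula as the Euler characteristic with exactly the three binomial factors you compute. Your dimension count and sanity check are correct, and your honest flagging of the exactness of the higher stages (and of the multiplicity bookkeeping it requires) as the real remaining work accurately reflects what the paper itself leaves unproved.
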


 In summary: 
 
\begin{theorem}\label{oldmainthm} Set $p\leq \lceil \frac {\mmm\nnn}2\rceil-1$ and assume  $\nnn\le\mmm$. Then
$$
 \tdim(\tker(\Mnl)_{A}^{\ww p})=
 \lll  \sum_{j=0}^{p-\mmm } (-1)^j\binom{\mmm\nnn}{p-\mmm-j}
\binom{\mmm+j-1}j\binom{\mmm+\nnn+j}{\mmm+j+1} .
$$ 
\end{theorem}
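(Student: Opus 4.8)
The plan is to reduce the statement to Lemma~\ref{kerpsipsumlem} by stripping off the inert $L$-factor. By \eqref{mmmap}--\eqref{psip} one may write $(\Mnl)_A^{\ww p}=\psi_p\ot Id_L$, where $\psi_p\colon \La p(M\ot U)\ot U\ra M^*\ot\La{p+1}(M\ot U)$, $U=N^*$, and $\dim(M\ot U)=\mmm\nnn$. Since tensoring a linear map with the identity on an $\lll$-dimensional space multiplies kernel dimensions by $\lll$, we get $\tdim\bigl(\tker(\Mnl)_A^{\ww p}\bigr)=\lll\cdot\tdim(\tker\psi_p)$. Thus it suffices to evaluate $\tdim\tker\psi_p$ for $\nnn\le\mmm$ and $p\le\lceil\mmm\nnn/2\rceil-1$, which is precisely Lemma~\ref{kerpsipsumlem}; multiplying that formula by $\lll$ gives the theorem.

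For completeness, here is the substance behind that lemma. Lemma~\ref{kerislem} identifies $\tker\psi_p$ with the $GL(M)\times GL(U)$-module $\bigoplus_\pi S_{\pi'}M\ot S_{\pi+(1)}U$ summed over $\pi=(\mmm,\nu_1\hd\nu_{\nnn-1})$ with $\nu\vdash p-\mmm$ and $\nu_1\le\mmm$; summing $\dim S_{\pi'}M\cdot\dim S_{\pi+(1)}U$ over these $\pi$ directly is awkward, so instead one realises $\tker\psi_p$ as $\tim\psi_{p,2}$ (Lemma~\ref{kerpsiplem}) and then uses $\tker\psi_{p,2}=\tim\psi_{p,3}$, $\tker\psi_{p,3}=\tim\psi_{p,4}$, and so on. The maps $\psi_{p,k}$ (for $k\ge2$, with $j:=k-2$) thus give a finite resolution of $\tker\psi_p$ by the modules $\La{p-\mmm-j}(M\ot U)\ot\La\mmm M\ot S^jM\ot S^{\mmm+1+j}U$ for $j=0,\dots,p-\mmm$ (beyond which $\La{p-\mmm-j}(M\ot U)=0$), so $\tdim\tker\psi_p$ is the alternating sum of their dimensions. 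Substituting $\tdim\La{p-\mmm-j}(M\ot U)=\binom{\mmm\nnn}{p-\mmm-j}$, $\tdim\La\mmm M=1$, $\tdim S^jM=\binom{\mmm+j-1}{j}$ and $\tdim S^{\mmm+1+j}U=\binom{\mmm+\nnn+j}{\nnn-1}=\binom{\mmm+\nnn+j}{\mmm+j+1}$ reproduces the claimed alternating binomial sum.

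The only genuinely load-bearing step --- and the point I expect to be the main obstacle --- is the exactness of the complex $(\psi_{p,k})_k$, i.e.\ $\tker\psi_{p,k}=\tim\psi_{p,k+1}$ for every $k$; granted this, the Euler characteristic count is purely combinatorial. I would prove it exactly as Lemma~\ref{kerpsiplem} is proved: by the Cauchy formula \eqref{cauchy} every irreducible summand of $\La{\bullet}(M\ot U)\ot S^{\bullet}M\ot S^{\bullet}U$ occurs with multiplicity one, each $\psi_{p,k}$ is $GL(M)\times GL(U)$-equivariant and therefore scalar on each isotypic line, and a single weight-vector computation --- adjoining a box $m\ot u$ to the wedge factor while moving a box out of $S^{j+1}M$ into the wedge's $M$-slot and shifting the symmetric powers of $U$ --- pins down which lines are annihilated and which survive, matching $\tim\psi_{p,k+1}$ with $\tker\psi_{p,k}$ term by term. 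Finally, the standing hypotheses $\nnn\le\mmm$ and $p\le\lceil\mmm\nnn/2\rceil-1$ are precisely those ensuring all the Schur functors in sight are nonzero and that there is no collapse coming from the $B\leftrightarrow C$ symmetry of the Young flattening, so no additional assumption is needed.
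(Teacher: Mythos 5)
Your overall route is the same as the paper's: split off the inert factor via $(\Mnl)_A^{\ww p}=\psi_p\ot Id_L$, so that $\tdim\tker(\Mnl)_A^{\ww p}=\lll\,\tdim\tker\psi_p$, resolve $\tker\psi_p$ by the complex of maps $\psi_{p,k}$, and take the Euler characteristic; your dimension bookkeeping is correct, including the identification $\binom{\mmm+\nnn+j}{\nnn-1}=\binom{\mmm+\nnn+j}{\mmm+j+1}$, and it reproduces Lemma \ref{kerpsipsumlem}, which times $\lll$ is the theorem.

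The gap is in the justification you offer for the step you yourself single out as load-bearing, namely exactness $\tker\psi_{p,k}=\tim\psi_{p,k+1}$ for all $k$. Your claim that, by the Cauchy formula, every irreducible $GL(M)\times GL(U)$-summand of the terms $\La{p-\mmm-j}(M\ot U)\ot\La \mmm M\ot S^jM\ot S^{\mmm+1+j}U$ occurs with multiplicity one is false as soon as $j\ge 1$ and $p-\mmm-j\ge 2$. For $j=0$ it does hold, because twisting by $\La \mmm M$ makes the $M$-factor $S_{\nu'+(1^{\mmm})}M$ remember the Cauchy summand $\nu$; but for $j\ge 1$ the extra Pieri product with $S^jM$ lets different $\nu$ produce the same pair of partitions. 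Concretely, for $\mmm=\nnn=4$, $p=7$, $j=1$ (within the hypotheses of the theorem), the module $S_{(3,2,1,1)}M\ot S_{(7,1)}U$ occurs with multiplicity two in $\La 2(M\ot U)\ot\La 4 M\ot M\ot S^{6}U$: once from the summand $S_{(2)}M\ot S_{(1,1)}U$ and once from $S_{(1,1)}M\ot S_{(2)}U$. Consequently equivariance does not force $\psi_{p,k}$ to be scalar on each isotypic component for $k\ge 3$, a submodule such as $\tker\psi_{p,k}$ is no longer determined by which irreducibles it contains, and a single weight-vector computation per summand cannot identify $\tker\psi_{p,k}$ with $\tim\psi_{p,k+1}$; you would need to control the induced maps on the multiplicity spaces, or prove exactness of the complex by a genuinely different (e.g. Koszul-type) argument. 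To be fair, the paper itself proves only the first step $\tim\psi_{p,2}=\tker\psi_p$ (Lemmas \ref{kerislem} and \ref{kerpsiplem}, where multiplicity-freeness is available) and merely asserts the exactness of the remaining steps before taking the Euler characteristic, so you are no worse off than the paper in relying on that exactness --- but the specific mechanism you propose to prove it does not work as stated.
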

 

In the case $\mmm=\nnn$ one can get a smaller kernel by identifying $V^*\simeq U$ and restricting to $A'=S^2U\subset U\ot U$,
although this does not appear to give a better lower bound than Theorem \ref{mainthm2}. A different restriction that allows for
a small kernel could conceivably give a better bound.

\section{Review of Lickteig's bound}\label{licksect}
For comparison, we outline the proof of Lickteig's bound.  (Expositions of Strassen's bound are given in several places, e.g. 
\cite[Chap. 3]{Ltensor} {and \cite[\S 19.3]{bucs:96}.})  It follows in three steps.
The first combines two  standard facts from algebraic geometry: for varieties
$X,Y\subset \BP V$, let 
$J(X,Y)\subset \BP V$ denote the join of $X$ and $Y$. Then
$\s_{r+s}(X)=J(\s_r(X),\s_s(X))$. If $X=Seg(\BP A\times \BP B\times \BP C)$
is a Segre variety, then $\s_s(Seg(\BP A\times \BP B\times \BP C))\subseteq Sub_s(A\ot B\ot C)$,
where 
\begin{align*}Sub_s(A\ot B\ot C):=\{& T\in A\ot B\ot C \mid 
\\
&\exists A'\subset A,\ B'\subset B, C'\subset C,
\ \tdim A'=\tdim B'=\tdim C'=s, \ T\in A'\ot B'\ot C'\}.
\end{align*}
 See, e.g.,\cite[{\S 7.1.1}]{Ltensor} for details.
Next Lickteig observes that if $T\in \s_{r+s}(Seg(\BP A\times \BP B\times \BP C))$, then
there exist $A',B',C'$ each of dimension $s$ such that, thinking of
$T:A^*\ot B^*\ra C$, 
\be\label{licke}
\tdim (T( ( A')\upperp \ot B^*+ A^*\ot (B')\upperp))\leq r.
\ene 
This follows because the condition is a  closed condition and it holds for
points on the open subset of points in the span of $r+s$ points on $Seg(\BP A\times \BP B\times \BP C)$.

Finally, for matrix multiplication, with $A=M\ot N^*$ etc., he defines $M'\subset M$, ${N^*}'\subset N^*$ to
be the smallest spaces such that $A'\subseteq M'\ot {N^*}'$ and similarly for the other spaces.
Then one applies \eqref{licke} combined with the observation that
$M|_{(A')\upperp\ot B^*}\subseteq M'\ot L^*$ etc., and keeps track of the various bounds to conclude.

\section{Appendix: facts from representation theory}
\label{repapp}

\subsection{Representations of $GL(V)$}\label{glwreps}
The irreducible  representations of $GL(V)$ are indexed by sequences $\pi=(p_1\hd p_l)$ of non-increasing integers with $l\leq \tdim V$. 
Those that occur in $V^{\ot d}$ are partitions of $d$, and we write $|\pi|=d$ and   $S_{\pi}V$ for the module.
$V^{\ot d}$ is also an $\FS_d$-module, and  the groups $GL(V)$ and $\FS_d$ are the commutants of each other in $V^{\ot d}$ which implies
the famous Schur-Weyl duality that $V^{\ot d}=\oplus_{|\pi|=d,\ell(\pi)\leq \bv} S_{\pi}V\ot[\pi] $ as a
$(GL(V)\times \FS_d)$-module, where $[\pi]$ is the irreducible
$\FS_d$-module associated to $\pi$. Repeated numbers in partitions are sometimes expressed as exponents when there is no danger of
confusion, e.g. $(3,3,1,1,1,1)=(3^2,1^4)$.
For example, $S_{(d)}V=S^dV$ and $S_{(1^d)}V=\La d V$. The modules  $S_{s^{\bv}}V=(\La {\bv}V)^{\ot s}$ 
  are trivial as $SL(V)$-modules .
The module $S_{(22)}V$ is the home of the Riemann curvature tensor in Riemannian geometry.
 See any of \cite[Chap. 6]{Ltensor}, \cite[Chap 6]{FH} or \cite[Chap. 9]{MR2265844} for more details on the
representations of $GL(V)$  and what follows.

\subsection{Useful decomposition formulas}\label{decompfor}
To decompose $S^2(A\ot B)$ as a $GL(A)\times GL(B)$-module, note that
given $P\in S^2A$ and $Q\in S^2B$, the   product  of $P$ and $Q$ defined by
$P\ot Q(\a\ot \b,\a'\ot \b'):=P(\a,\a')Q(\b,\b')$ will be in $S^2(A\ot B)$. Similarly,
if $P\in \La 2 A$ and $Q\in \La 2 B$,   $P\ot Q$  
will also be symmetric as $P(\a',\a)Q(\b',\b)=[-P(\a,\a')][-Q(\b,\b')]=P(\a,\a')Q(\b,\b')$.
Since the dimensions of these spaces add to the dimension of $S^2(A\ot B)$ we 
conclude {$$S^2(A\ot B)=(S^2A\ot S^2B)\op (\La 2 A\ot \La 2 B).$$}
By an analogous argument, we have the decomposition {$$\La 2 (A\ot B)= (S^2A\ot \La 2B)\op (\La 2 A\ot S^2 B).$$}
More generally (see, e.g. \cite[\S 6.5.2]{Ltensor}) we have
\begin{align}\label{cauchy}
\La p(A\ot B)&=\oplus_{|\pi|=p} S_{\pi}A\ot S_{\pi'}B\\
S^p(A\ot B)&=\oplus_{|\pi|=p} S_{\pi}A\ot S_{\pi}B
\end{align}
where $\pi'$ denotes the conjugate partition to $\pi$, that is, if we represent $\pi=(p_1\hd p_n)$ by a Young diagram,
with $p_j$ boxes in the $j$-th row, the diagram of $\pi'$ is obtained by reflecting the diagram of $\pi$ along the $NW$ to $SE$
axis.

\subsection{The Pieri rule}\label{pieris}
The decomposition of $S_{\pi}V\ot V$ is multiplicity free, consisting of a copy of each $S_{\mu}V$
such that the Young diagram of $\mu$ is obtained from the Young diagram of $\pi$ by adding a box.
(Boxes must be added in such a way that one still has a partition and the number of rows is at most
the dimension of $V$.)

For example:

$\begin{matrix}\yng(2,1,1)\end{matrix}\otimes\begin{matrix}\yng(1)\end{matrix}=\begin{matrix}\yng(3,1,1)\end{matrix}\oplus\begin{matrix}\yng(2,2,1)\end{matrix}$
\vskip 0.5cm

More generally, 
  {\it Pieri formula } states that $S_{\pi}V\ot S^dV$ decomposes multiplicity free into the sum of all
  $S_{\mu}V$ that can be obtained by adding $d$ boxes to the Young diagram of $\pi$ in such a way that
  no two boxes are added to the same column, and 
  $S_{\pi}V\ot \La dV$ decomposes multiplicity free into the sum of all
  $S_{\mu}V$ that can be obtained by adding $d$ boxes to the Young diagram of $\pi$ in such a way that
  no two boxes are added to the same row. See any of the standard references given above for details.

\bibliographystyle{amsplain}
 
\bibliography{LmatrixLO}

\end{document}